\begin{document}
\title{Referral-Embedded Provision Point Mechanisms for Crowdfunding of Public Projects}

\author{Praphul Chandra\inst{1} \and Sujit Gujar\inst{2} \and Y. Narahari \inst{3}}

\institute{praphulcs@gmail.com, Indian Institute of Science, Bangalore \and 
sujit.gujar@iiit.ac.in, International Institute of Information Technology, Hyderabad \and hari@csa.iisc.ernet.in, Indian Institute of Science, Bangalore
}

\maketitle
\begin{abstract}
Civic Crowdfunding is emerging as a popular means to mobilize funding from citizens for public projects. A popular mechanism deployed on civic crowdfunding platforms is a provision point mechanism, wherein, the total contributions must reach a predetermined threshold in order for the project to be provisioned (undertaken). Such a mechanism may have multiple equilibria; unfortunately, in many of these equilibria, the project is not funded even if it is highly valued among the agents. Recent work has proposed mechanisms with refund bonuses where the project gets funded in equilibrium if its net value is higher than a threshold \emph{among the agents who are aware} of the crowdfunding effort. In this paper, we formalize the notion of social desirability of a public project and propose mechanisms which use the idea of {\em referrals\/} to expand the pool of participants and achieve an equilibrium in which the project gets funded if its net value exceeds a threshold \emph{among the entire agent population}. We call this new class of mechanisms {\em Referral-Embedded Provision Point Mechanisms\/} (REPPM). We specifically propose two variants of REPPM and both these mechanisms have the remarkable property that, at equilibrium, referral bonuses are \emph{offered} but there is no need for actual payment of these bonuses. We establish that any given agent's equilibrium strategy is to refer other agents and to contribute in proportion to the agent's true value for the project. By referring others to contribute, an agent can, in fact, reduce his equilibrium contribution. In our view, the proposed mechanisms will lead to an increase in the number of projects that are funded on civic crowdfunding platforms.
\end{abstract}

\markboth{Praphul Chandra, Sujit Gujar and Y. Narahari}{Referral-Embedded Provision Point Mechanisms for Civic Crowdfunding}


\section{Introduction}
Civic crowdfunding platforms like Spacehive \cite{spacehive}, Citizinvestor \cite{citizinvestor} and Neighbourly \cite{neighbourly} etc., aim to generate funding for public and community projects from citizens. In the United Kingdom alone, Spacehive  has generated \textsterling $5$ million for over $150$ public projects from citizen contributions across $68$ cities. A typical process that is followed in crowdfunding of public projects is as follows:

\begin{figure}
\centering{\includegraphics[scale=0.4]{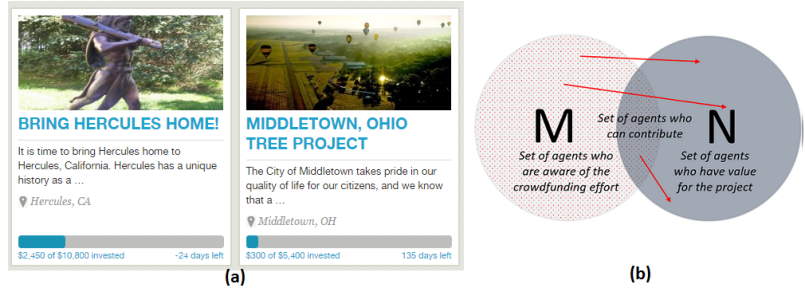}}
\caption{(a) Projects on a civic crowdfunding platform (b) Referrals can expand the set of agents who can contribute.}
\label{Motivation}
\end{figure}

\begin{enumerate}
\item\emph{Requester posts a public project proposal}: A requester, seeking crowdfunding for a public project, posts a proposal. The proposal specifies a target amount of funds to be raised for the project to be provisioned:  the target amount is known as the \emph{provision point}. The requester also specifies a deadline by which the funds need to be raised.
\item\emph{Agents arrive}: Agents arrive over time to view the project and observe (a) the target amount,  (b) the amount pending to be funded, and (c) the deadline. 
\item\emph{Agents contribute}: An agent may contribute any amount to the project.
\item\emph{Requester provisions or refunds}: If the funding target is achieved by the deadline, the requester provisions the project; otherwise, the contributions of all agents are refunded.
\end{enumerate}
\noindent
We refer to this as the \emph{Provision Point Baseline} (PPB) mechanism. The class of \emph{Provision Point Mechanisms} (PPM) we consider share the following characteristics: (i) {if the provision point is reached, the contributions are collected and the project is provisioned} and (ii) {if provision point is not reached, the project is not provisioned and the contributions are refunded; in addition, a bonus may be paid to agents}. In PPB, no bonus is offered or paid. In the \emph {Provision Point Mechanism with Refund Bonus} (PPR) \cite{zubrickas2014} and \emph{Provision Point Mechanism with Securities} (PPS) \cite{chandra2016}, bonuses are offered to incentivize contributions. In this  work, we propose the class of \emph{Referral-Embedded Provision Point Mechanisms} (REPPM) and instantiate two mechanisms of this class: \emph {Referral-Embedded Provision Point Mechanism with Refund Bonus} (REPP-R)  and \emph {Referral-Embedded Provision Point Mechanism with Securities} (REPP-S).

\section{Problem Statement, Related Work, Our Contributions}
\subsubsection{Problem Statement:} Civic crowdfunding has had mixed results with only 44\% of posted projects getting funded\cite{spacehive}. An important question is to understand why a project does not get funded. There are at least three distinct challenges:
\begin{enumerate}
\item {\textsc{Challenge-1:} the project is not valued enough in the agent population it is purported to benefit.}
\item {\textsc{Challenge-2:} the project is valued enough by the agents but not all the agents who value it were aware of the crowdfunding effort.}
\item {\textsc{Challenge-3:} the project is valued enough by the agents, all the agents were aware of the crowdfunding effort, but some agents chose to free ride on the contributions of others.} 
\end{enumerate}
\textsc{Challenge-3} may be attributed to the use of PPB mechanism which has been shown to have multiple equilibria, in many of which the project is not funded \cite{bagnoli1989,brubaker1975free,schmidtz1991limits}. To solve \textsc{Challenge-3}, Zubrickas \emph{et.al} \cite{zubrickas2014} and Chandra \emph{et.al.} \cite{chandra2016} propose PPR and PPS respectively. Our current work is motivated by \textsc{Challenge-2} on crowdfunding platforms, where, a subtle yet critical distinction must be made between the set of agents who \emph{value} the public project ($N$) and the set of agents who are \emph{aware} of the crowdfunding effort ($M$). The set of agents who can contribute is $M \cap N$ (Figure \ref{Motivation}(b)). To solve \textsc{Challenge-2}, one approach is to use a referral mechanism to incentivize agents who are aware of the crowdfunding effort to refer others who might value the project. With agents in a social network, referral mechanisms can expand the pool of participants, thus increasing the funds created through crowdfunding. This alone however may not be sufficient; we  need  mechanisms which solve \textsc{Challenge-2} and \textsc{Challenge-3} together. This is the problem we address in this paper.

\subsubsection {Challenges in Designing Referrals for Provision Point Mechanisms:}
Our first challenge in introducing a referral mechanism in provision point mechanisms is to ensure that agents do not free-ride and do, in fact, contribute to the public project. Note that, on its own, a referral mechanism \emph{increases} the incentive to free ride since an agent may rely on referral bonuses for gaining his utility. Our second challenge is that, since the project is \emph{public} (non-excludable, non-rival), it is apriori unclear who would pay the referral bonus. In fact, no agent may be willing to pay a referral bonus: this is a key difference with other referral mechanisms in the literature where there exists a center (henceforth, sponsor) who benefits from the referrals. Third, it is desirable that an agent's referral bonus is proportional to the contributions of his referrals and his equilibrium contribution is proportional to his true value for the project (fairness).

\subsubsection {Related Work in Crowdfunding:} Our work is most closely related to the work of Zubrickas \emph{et.al} \cite{zubrickas2014} and Chandra \emph{et.al.} \cite{chandra2016} which we discuss in detail in Section \ref{REPPM}. There is significant literature in the design of mechanisms for the private provisioning of pubic projects \cite{bagnoli1989,brubaker1975free,chandra2016,chen1999,groves1977,schmidtz1991limits,tabarrok1998private,zubrickas2014}. In the {\em voluntary contribution mechanism\/} (VCM), agents voluntarily contribute and the extent of the public project provisioned corresponds to the aggregate funds collected. VCM induces a simultaneous move game which has multiple equilibria and in many of these equlibiria, the public project is not provisioned \cite{healy2006}. Morgan et. al. \cite{morgan2000} studies the use of state lotteries to incentivize contributions to public projects, wherein, a higher contribution leads to a higher likelihood of winning: the game induced attains a unique equilibrium which outperforms VCM. Marx et. al. \cite{marx2000} consider a setting where agents make repeated contributions in a round-robin fashion and prove the existence of a Nash equilibrium where an agent contributes if and only if others have contributed their equilibrium contributions. 

\subsubsection {Related Work in Referral Mechanisms:} Referral mechanisms have been used in a wide variety of settings like the red balloon challenge \cite{nath2012mechanism,pickard11}, viral marketing \cite{abbassi11,arthur09,chandra2016b,drucker12,emek2011mechanisms,lobel2015customer}, and query propagation in social networks \cite{dixit2009quality,kleinberg05}. These mechanisms may be classified based on whether the sponsor of the referral bonus seeks to maximize the spread of information in the network (e.g. viral marketing) or find an (a set of) agent(s) to achieve an objective (e.g. red balloon, query propagation). In either case, however, the sponsor pays out the referral bonus contingent on some observable action or target being achieved. 

\subsubsection{Our Contributions:} Our primary contribution is to design a class of mechanisms for civic crowdfunding of public projects which achieve an equilibrium that is \emph{socially desirable}
(informally, there is enough value of the project to all the agents) 
we define social desirability more formally later on (Definition \ref{def:sc})). We call the proposed class of mechanisms {\em Referral-Embedded Provision Point Mechanisms\/} (REPPM) and quantify the advantage of REPPM over PPM in terms of social desirability. We instantiate two variants of REPPM: REPP-R and REPP-S corresponding to PPR \cite{zubrickas2014} and PPS \cite{chandra2016} respectively. In both these mechanisms, agents are incentivized to contribute using a \emph{refund bonus} and are incentivized to refer other agents to contribute using a \emph{referral bonus} - the latter relies on a {\em Referral Bonus Function\/} (RBF). In REPP-R, both the bonuses are computed using a proportional scheme and in REPP-S, they are computed using an underlying prediction market. 

We believe that in the context of civic crowdfunding of public projects, our work is the first  to design a referral mechanism which \emph{offers} a referral bonus that incentivizes referrals but remarkably \emph{does not pay} the referral bonus in equilibrium. Exploiting this, we show that if the agent population which will benefit from the project forms a connected graph via social relations, our mechanisms achieve an equilibrium where the project is funded (Theorem \ref{thm:REPP_PPR_NE}, Theorem \ref{thm:REPP_PPS_NE}). We also show that an agent's equilibrium contribution is proportional to the agent's value for the project, \emph{less a referral bonus} which depends on the net contributions due to the agent's referrals. We note that both PPR and PPS require a \emph{sponsor} who \emph{offers} bonuses to incentivize contributions. A key advantage of our approach is that it reduces the sponsor's risk: this should make it easier to find such sponsors and thus increase the success of civic crowdfunding. 

The rest of the paper is organized as follows. In Section \ref{Notation}, we set up the notation that we use in the rest of the paper. In Section \ref{REPPM}, we formalize the notion of social desirability, introduce the notion of embedding a referral bonus function in provision point mechanisms and specify the conditions that such a function must satisfy to be used REPPM. In Section \ref{REPPR} and Section \ref{REPPS}, we instantiate REPPM corresponding to PPR and PPS and study the impact of doing so on the equilibrium. We conclude in Section \ref{Conclusion} with a summary.

\section{Notation, Setup, and Assumptions} \label{Notation}
\subsubsection{Notation:} Let $N$ be the set of agents who value a given public project and let $M$ be the set of agents who are aware of the crowdfunding effort. Hence the set of agents who can contribute funds to the project is $M\cap N$ (See Figure \ref{Motivation}(b)). The value that agent $i$ derives from the public project getting provisioned is $\theta_i$ and the net value for the project among agents who can contribute is $\vartheta_{M \cap N} = \sum_{i=1}^{|M \cap N|} \theta_i$. Let $h^0$ be the target amount that needs to be collected for the project to be provisioned. Agent-$i$'s contribution is $x_i \in [0, h^0]$ and the net contribution is $\chi = \sum_{i=1}^{|M \cap N|} x_i$. The vector of contributions is $\mathbf{x} = (x_1,...,x_{|M\cap N|}) \in \mathbb{R}_+^{|M \cap N|}$. We use the subscript $-i$ to represent all agents other than agent $i$, for example, $\mathbf{x}_{-i}$ refers to the vector of contributions of all agents except $i$. Agent $i$ may refer $M_i \subseteq N_i$ other agents, where $N_i$ is set of his neighbors in the social network. 

In a sequential setting, at $t=0$, the requester posts a proposal for funding a public project. This includes the target amount of funds $h^0$ (the provision point) and a deadline $T$ till which agents may contribute to the project. $h^t$ refers to the target amount that remains to be collected at time $t$. Agent-$i$ arrives at time $a_i \in [0,T]$ and observes the funds that have been collected so far ($h^0-h^{a_i}$). Agent $i$ may decide to contribute funds $x_i \in [0,h^{a_i}]$ at any time $t_i \in [a_i,T]$. Thus, in the sequential setting, agent $i$'s strategy, $\psi_i$, consists of his contribution ($x_i$), his time of contribution ($t_i$) and the set of agents he refers ($M_i$) and his utility is $u_i(\psi;\theta_i)$. Table \ref{tab2} (Appendix \ref{ssec:notation}) summarizes key notation.

\begin{figure}
\centering{\includegraphics[scale=0.54]{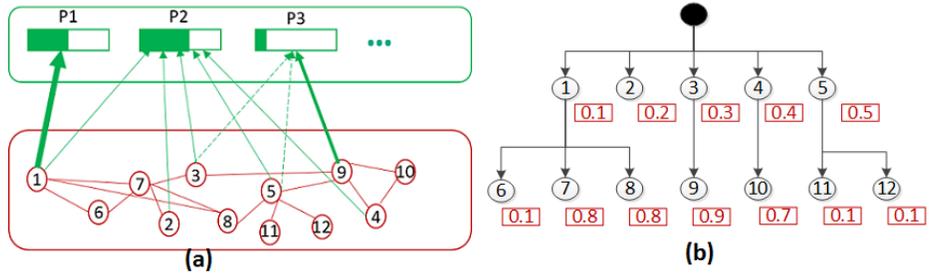}}
\caption{(a) Agent's contributions to projects; (b) Referral tree for P2 with agent types.}
\label{RefTree}
\end{figure}
\noindent
Agents in $M \cap N$ can be represented as a directed graph with the sponsor as the root. If more than one agent refers the same agent, the earliest referral takes precedence. Agents who contribute without being referred by another agent form the sponsor's single hop neighbors. Thus, the referral graph is a tree. Consider, for example, the scenario in Figure \ref{RefTree}(a) where three public projects are requesting funds from 12 agents ($|N| = 12)$. An edge from an agent to a project represents that the agent is aware of the effort (visited the project page). The weight of the edge represents an agent's contribution to the project: we use a dotted edge to represent a contribution of value zero. For $P2$,  $M = \{1,2,3,4,5\}$ are aware of the crowdfunding effort and have contributed; if agents $\{1,3,4,5\}$ refer their neighbors, we get the referral tree of Figure \ref{RefTree}(b). 

We make the following assumptions.  \textsc{Assumption-1}: Agents have quasi-linear utility \cite{bagnoli1989,chandra2016,zubrickas2014}. \textsc{Assumption-2}: Apart from  knowing the history of contributions, agents do not have any information regarding whether the project will get funded or not \cite{chandra2016,zubrickas2014}. \textsc{Assumption-3} : The set of agents who have a non-zero value for the project ($N$) forms a connected graph. {\textsc{Assumption-4:} In a sequential setting, agents contribute only once to the project (agents typically visit the project website once and contribute if the project has value to them). Our mechanisms ensure that agents have no advantage in delaying or splitting up their contributions. \textsc{Assumption-5}: An agent's value for the public project ($\theta_i$) is his private information and $h^0$, $T$, and $h^t$ are common knowledge.

\section {Referral-Embedded Provision Point Mechanisms} \label {REPPM}
In the class of PPM, an agent's utility can be stated as follows: 
\begin{definition}{(Un)Funded Utility:} In the class of provision point mechanisms, the (un)funded utility of agent-$i$ is his utility if the target amount is (not) collected and the public project is (not) provisioned.
\end{definition}
In the provision point mechanisms we consider, an agent's funded utility is always $(\theta_i - x_i)$ but mechanisms differ in the unfunded utility. We let $\mathcal{I}_X$ be an indicator random variable which takes the value $1$ if $X$ is true and $0$ otherwise.

\subsection {Provision Point Baseline (PPB) Mechanism}
In PPB, an agent's strategy space consists only of contribution to be made, hence $\psi_i = x_i \quad \forall i$. His unfunded utility is zero and hence his utility is:
\begin{eqnarray}
u_i(\mathbf{x};\theta_i) &=& \mathcal{I}_{\chi \geq h^0}\times(\theta_i - x_i)  + \mathcal{I}_{\chi < h^0} \times 0
\end{eqnarray}
PPB has been shown to have multiple equilibria, many of which are inefficient \cite{bagnoli1989}: a result which has been verified empirically too \cite{healy2006}.

\subsection {Provision Point Mechanism with Refund Bonus (PPR)}
In PPR \cite{zubrickas2014}, if the funding target is not achieved, the contributions are refunded and an additional refund bonus is paid to agents who volunteered to contribute. The refund bonus is $\frac{x_i}{\chi} B \; \forall i$ where $B > 0$ is the refund budget specified at the beginning and is common knowledge among all agents. An agent's strategy space in PPR consists only of his contribution, hence $\psi_i = x_i \quad \forall i$ and his utility is:
\begin{eqnarray}
u_i(\mathbf{x};\theta_i) &=& \mathcal{I}_{\chi \geq h^0}\times (\theta_i - x_i) + \mathcal{I}_{\chi < h^0}\times \left(\frac{x_i}{\chi}B \right) \label{PPRutility}
\end{eqnarray}
The set of Pure Strategy Nash equilibria (See Appendix for definition) with PPR is characterized as follows: 
\begin{theorem}{\cite{zubrickas2014}}
Let $\vartheta_{M \cap N} > h^0$ and $B > 0$. In PPR, the set of PSNE are $\{(x_i^*) : x_i^* \leq  \frac{h^0}{B+h^0}\theta_i \forall i ; \chi = h^0\}$ if $B \leq \vartheta_{M \cap N} - h^0$. Otherwise the set of PSNE is empty. 
\label{PPR_NE}
\end{theorem}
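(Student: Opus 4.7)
The plan is to characterize the PSNE in two main stages. First, I would show that in any equilibrium profile $\mathbf{x}^{*}$ the aggregate contribution must satisfy $\chi^{*} = h^{0}$. Second, given $\chi^{*} = h^{0}$, I would derive the per-agent bound $x_{i}^{*} \leq \frac{h^{0}\theta_{i}}{B+h^{0}}$ from the individual no-deviation conditions, and finally translate the feasibility of such a profile into the stated budget condition $B \leq \vartheta_{M \cap N} - h^{0}$.

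For the first stage, both $\chi^{*} > h^{0}$ and $\chi^{*} < h^{0}$ admit profitable deviations. If $\chi^{*} > h^{0}$, any agent with $x_{i}^{*} > 0$ can shave his contribution by a small $\varepsilon$ while the project remains funded, strictly increasing $\theta_{i}-x_{i}^{*}$. If $\chi^{*} < h^{0}$ and some agent $j$ has $x_{j}^{*} = 0$, then contributing a small $\varepsilon$ keeps the project unfunded and raises $j$'s utility from $0$ to the positive value $\frac{\varepsilon B}{\chi^{*}+\varepsilon}$. The only remaining subcase is when every agent already contributes positively; a short computation on the deviation $x_{i}' \to (x_{i}^{*} + h^{0} - \chi^{*})^{-}$ (just short of triggering funding) forces $x_{i}^{*} = \chi^{*}$ for every contributor, hence at most one contributor, which contradicts everyone contributing positively (with the degenerate $\chi^{*}=0$ case broken by any agent deviating to a small positive contribution and collecting the full $B$).

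For the second stage, assume $\chi^{*} = h^{0}$. Upward deviations give $\theta_{i} - x_{i}' < \theta_{i} - x_{i}^{*}$, so only downward deviations bind. For $x_{i}' \in [0, x_{i}^{*})$ the utility becomes $\frac{x_{i}' B}{h^{0} - x_{i}^{*} + x_{i}'}$, which is monotonically increasing in $x_{i}'$ with supremum $\frac{x_{i}^{*}B}{h^{0}}$ as $x_{i}' \to x_{i}^{*-}$. Ruling out profitable deviation is therefore equivalent to $\theta_{i} - x_{i}^{*} \geq \frac{x_{i}^{*} B}{h^{0}}$, which rearranges to the stated bound. For existence, a profile must also satisfy $\sum_{i} x_{i}^{*} = h^{0}$ together with $x_{i}^{*} \leq \frac{h^{0}\theta_{i}}{B+h^{0}}$; the maximum feasible aggregate is $\frac{h^{0} \vartheta_{M \cap N}}{B + h^{0}}$, which is at least $h^{0}$ iff $B \leq \vartheta_{M \cap N} - h^{0}$. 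When the inequality holds, the explicit profile $x_{i}^{*} = \frac{h^{0}\theta_{i}}{\vartheta_{M \cap N}}$ witnesses a PSNE; when it fails, no profile satisfies all the constraints simultaneously and the equilibrium set is empty.

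The main subtlety I expect is ruling out the underfunded case $\chi^{*} < h^{0}$, precisely because the relevant deviation suprema in the underfunded argument (and in the bound derivation itself) are not attained on the open interval. One has to argue from the supremum value rather than from a specific maximizing deviation, and must separately dispose of the degenerate single-contributor and zero-contribution configurations where the refund-bonus formula is vacuous or ill-defined. Once these corner cases are carefully handled, the rest is a clean algebraic manipulation of the no-deviation inequality and a pigeonhole-style feasibility check on the sum constraint.
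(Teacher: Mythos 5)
Your proposal is correct and follows essentially the same route as the source: the paper itself imports this result from Zubrickas (2014) without proof, and both that original argument and the paper's analogous proof of Theorem~\ref{thm:REPP_PPR_NE} use exactly your decomposition --- first force $\chi = h^0$ by ruling out over- and under-funded profiles via marginal deviations, then derive $x_i^* \leq \frac{h^0}{B+h^0}\theta_i$ by comparing the funded utility against the (unattained) supremum $\frac{x_i^*}{h^0}B$ of the unfunded utility, then obtain $B \leq \vartheta_{M\cap N} - h^0$ by summing the no-deviation inequalities. Your treatment is, if anything, more careful than the paper's on the corner cases (the open-interval suprema, the zero-contribution profile, and the single-contributor configuration), so no gap needs to be flagged.
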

PPR considers a setup where agents decide their contributions \emph{simultaneously} without knowledge of contributions made by the other agents: it considers a simultaneous move game. 

\subsection{Provision Point Mechanism with Securities (PPS)}
In a sequential setting where agents arrive over time and can observe the contributions collected thus far (e.g. civic crowdfunding platforms), the PPS mechanism is better suited. In PPS \cite{chandra2016}, if the funding target is not achieved by the deadline $T$, the contributions are refunded and an additional refund bonus is paid to agents who volunteered to contribute. The refund bonus is designed so that early contributions are incentivized. PPS uses a complex prediction market \cite{abernethy2013efficient} to determine the refund bonus with the key idea being that contributors actually buy contingent securities ($r_i^{t_i}$) which each pay a unit amount if the project is not funded. An agent's strategy space in PPS consists of the quantum and timing of his contribution, hence $\psi_i = (x_i,t_i) \quad \forall i$. Thus, his utility is given as:
\begin{eqnarray}
u_i(\psi;\theta_i) &=& \mathcal{I}_{\chi \geq h^0}(\theta_i - x_i) + \mathcal{I}_{\chi < h^0}(r_i^{t_i} - x_i )  \label{REPPSUtil1} 
\end{eqnarray}
PPS achieves an equilibrium at which the project is funded and thus the refund bonus is not paid out at equilibrium \cite{chandra2016}.

\subsection{Referral Embedded Provision Point Mechanisms (REPPM)} \label{RBFsection}
Similar to PPM, in REPPM too, the project is provisioned only if the collected funds reach the provision point. If the provision point is not reached, the contributions are refunded and an additional bonus is paid to agents who volunteered to contribute. This bonus consists of two parts (i) a refund bonus and (ii) a referral bonus. The refund bonus is calculated using the underlying provision point mechanism while the referral bonus is calculated using a Referral Bonus Function (RBF). The referral bonus incentivizes agents to refer other agents to contribute to the public project. The key intuition in REPPM is to embed a RBF in provision point mechanisms such that it impacts \emph{only} the unfunded utility of agents: since the unfunded utility is realized only if the project is not funded, the referral bonus is paid out only if the project is not provisioned. Thus, REPPM, follow a two pronged approach:
\begin{enumerate}
\item {Design a referral mechanism where a referral bonus is \emph{offered} but is not paid out if the project is funded.}
\item {Intelligently embed the referral mechanisms in provision point mechanisms so that the public project is funded at equilibrium.}
\end{enumerate}

\subsubsection{Advantage of REPPM}: To quantify the advantage of REPPM, we first formalize the notion of social desirability:
\begin{definition}
\label{def:sc}
($N,\tau$) Socially Desirable : A public project is said to be ($N,\tau$) socially desirable if the net value of the project among agents in the set $N$ is greater than $\tau$, that is, $\vartheta_N = \sum_{i=1}^n \theta_i > \tau$.
\end {definition}
In Figure \ref{RefTree}(b), if the cost of $P2$ were $2$, then the project is not socially desirable without referrals but is socially desirable with referrals. PPR \cite{zubrickas2014} ensures that the project gets funded at equilibrium if it is $(M \cap N, h^0)$ socially desirable. In a sequential setting, PPS \cite{chandra2016} ensures that the project gets funded at equilibrium if it is $(M \cap N, C_0^{-1}(h^0 + C_0(0)))$ socially desirable\footnote{In Section \ref{REPPS}, we will explain the $C_0$ function in more detail.}. Thus, in both PPR and PPS mechanisms,  the social desirability condition is based on $M \cap N$. We design mechanisms that are $(N, \tau)$ socially desirable rather than $(M \cap N, \tau')$ socially desirable. We show that this can be achieved if the RBF satisfies the following:

\begin{enumerate}
\item {\textsc{RBF-Condition-1 (Continuous and Differentiable)}  This condition requires that the gradient of the RBF ($\frac{ds}{dR}$) is well defined everywhere so that the marginal increase in referral bonus due to increase referred contribution is well defined.}
\item {\textsc{RBF-Condition-2 (Monotonically Increasing)} This condition requires that the gradient of the RBF is positive (\emph{$s'(R) = \frac{d s}{d R} > 0 \quad \forall R \in \mathbb{R_+}$}) so that an agent has an incentive to refer all agents in his network.} 
\item {\textsc{RBF-Condition-3 (Bounded Loss)} This condition requires that the referral bonus is upper bounded ($s(R) < \sigma \quad \forall R \in \mathbb{R}_+$) so that the loss of the sponsor is upper bounded. To this end, we will required that \emph{$\frac{d^2 s}{dR^2} \leq 0 \quad \forall R$}}. 
\end{enumerate}
Finally, if an agent does not refer, the agent does not get any referral bonus ($s(0) = 0$) so that in the absence of referrals, the mechanism reduces to the underlying provision point mechanism. Though this is not strictly required, it makes the analysis simpler. Some examples of functions that can be used as RBF are $\tanh(R)$, $\left(\frac{1}{1 + \exp\left(- R\right)} - 0.5\right)$ and $\frac{2}{\pi} \arctan (R)$. The choice among these depends on the minimum bonus that needs to be offered to incentivize referrals. We now discuss two instantiations of REPPM.

\section {Embedding Referrals in PPR : REPP-R} \label{REPPR}
REPP-R embeds referrals in PPR. Similar to PPR, in REPP-R, the project is provisioned only if the collected funds reach the provision point. If the provision point is not reached, the contributions are refunded and an additional bonus is paid to agents who volunteered to contribute. This bonus consists of two parts (i) a refund bonus and (ii) a referral bonus. Let $X_{M_i} = \sum_{j \in M_i} x_j$. In REPP-R, agent-$i$'s strategy is $\psi_i = (x_i, t_i, M_i)$ and he has a utility:
\begin{eqnarray}
u_i(\psi;\theta_i) &=& \mathcal{I}_{\chi \geq h^0}(\theta_i - x_i) + \mathcal{I}_{\chi < h^0} \left(\frac{x_i}{\chi}B + s(X_{M_i}) \right) \label{REPPRutility}
\end{eqnarray}
Comparing Equation \eqref{REPPRutility} with Equation \eqref{PPRutility}, we can observe that the unfunded utility in REPP-R contains an additional term which depends on the contributions of agents referred by agent $i$.

\subsection{Impact of Introducing Referral Bonus in PPR} \label{REPPRImpact}
\begin{figure}
\centering{\includegraphics[scale=0.38]{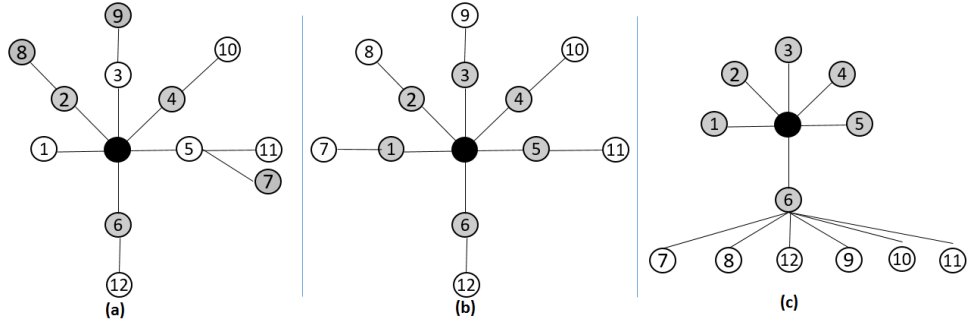}}
\caption{Some Referral Tree Structures in REPP-S}
\label{ReferralBonus}
\end{figure}
\noindent
To understand the impact of introducing referrals in REPP-R, we evaluate the maximum referral bonus that may need to be paid out: this depends on the RBF and the structure of the underlying referral tree. We can show (see Appendix) that the maximum referral bonus needs to be paid out when the provision point ($h^0$) is achieved by $n = |N|$ contributions of the smallest possible contribution $\delta = \frac{h^0}{n}$ and each contributing agent is referred by a chain of $d$ non-contributing agents where $d$ is the diameter of the social network of $N$ agents. The maximum referral bonus paid out is $nd \times s(\delta) < nd\sigma$. Figure \ref{ReferralBonus}(b) shows such a worst case with $d=1$ with the sponsor as the root and the shaded nodes indicating agents who did not contribute. 

\subsection{Equilibrium Analysis of REPP-R}
\begin{theorem}
Let $s()$ be a referral bonus function that satisfies \textsc{RBF-Conditions 1-3}. If REPP-R is used for crowdfunding a project with provision point $h^0$ when $\sigma < \frac{\vartheta_N - h^0 - B}{nd}$, the strategies in the set $\bigg\{(\psi_i^* = \{x_i^*, N_i\}) : x_i^* \leq  \min \left(0,\frac{\theta_i - \sigma}{1+\frac{B}{h^0}} \right)$, otherwise $x_i^* = 0$; $\chi = h^0 )\bigg\}$ are Nash equilibria.
\label{thm:REPP_PPR_NE}
\end{theorem}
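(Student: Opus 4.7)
The plan is to adapt the equilibrium proof of PPR (Theorem \ref{PPR_NE}) to REPP-R, where the only structural change is an additive referral-bonus term $s(X_{M_i})$ in the unfunded utility (Equation \eqref{REPPRutility}). Since this term enters \emph{only} through $\mathcal{I}_{\chi < h^0}$, it does not affect the utility at the claimed profile where $\chi = h^0$; its role is purely to discipline deviations.

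First I would verify feasibility: that the per-agent bounds $\bar{x}_i = \max\!\bigl(0,(\theta_i-\sigma)/(1+B/h^0)\bigr)$ permit a profile with $\sum_i x_i^* = h^0$. Using the hypothesis $\sigma < (\vartheta_N - h^0 - B)/(nd)$, hence $\vartheta_N > h^0 + B + nd\sigma \geq h^0 + B + n\sigma$, and dropping the non-positive terms coming from agents with $\theta_i \leq \sigma$, one obtains $\sum_i \bar{x}_i \geq (\vartheta_N - n\sigma)/(1+B/h^0) > (h^0+B)/(1+B/h^0) = h^0$, so a feasible splitting exists. Assumption-3 (connectedness of $N$) together with the referral choice $M_i = N_i$ guarantees that the referral tree rooted at the sponsor spans all of $N$, so that every agent with $\theta_i > \sigma$ can actually be recruited to contribute.

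Next I would check unilateral deviations agent by agent. For referral deviations, at the candidate profile $\chi=h^0$ the utility equals $\theta_i - x_i^*$ regardless of $M_i$, so shrinking $M_i$ is never strictly profitable; off-equilibrium (if $\chi<h^0$), RBF-Condition-2 makes $s(X_{M_i})$ weakly larger when $M_i=N_i$, so referring every neighbor is weakly dominant. For contribution deviations, fix $\mathbf{x}^*_{-i}$ with $\sum_{j\neq i} x_j^* = h^0 - x_i^*$. Upward deviations keep the project funded but strictly reduce $\theta_i - x_i$, hence are unprofitable. A downward deviation $x_i' < x_i^*$ makes the project fail and yields utility $\tfrac{x_i'}{h^0 - x_i^* + x_i'}\,B + s(X_{N_i})$, which is increasing in $x_i'$ with supremum $\tfrac{x_i^*}{h^0}B + s(X_{N_i})$ as $x_i' \uparrow x_i^*$. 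By RBF-Condition-3, $s(X_{N_i}) < \sigma$, so the bound $x_i^* \leq (\theta_i-\sigma)/(1+B/h^0)$ rearranges to $\theta_i - x_i^* \geq \tfrac{x_i^*}{h^0}B + \sigma > \tfrac{x_i^*}{h^0}B + s(X_{N_i})$, i.e., the funded payoff dominates every feasible downward deviation. Agents with $\theta_i \leq \sigma$ have $x_i^* = 0$, which is trivially a best response since they cannot go lower and any positive contribution is strictly worse. Finally, because PPR is a simultaneous-move game and neither $B$ nor $s(\cdot)$ depends on $t_i$, timing is irrelevant.

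The main obstacle I anticipate is the feasibility/connectedness step rather than the deviation algebra: one must argue carefully that the referral strategy $M_i = N_i$ actually expands the effective contributing population from $M\cap N$ to all of $N$, so that the aggregate willingness-to-pay in the denominator is $\vartheta_N$ and not merely $\vartheta_{M\cap N}$; and one must justify that the loose $nd$ slack in the hypothesis on $\sigma$ (reflecting the worst-case referral-tree structure identified in Section \ref{REPPRImpact}) is what is needed to cover $h^0 + B$ plus the aggregate referral-bonus obligations. The deviation analysis itself goes through almost verbatim from PPR because the referral bonus enters additively on the unfunded branch.
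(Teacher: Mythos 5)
Your proof is correct and, if anything, logically tighter than the paper's own argument, though it is built from the same ingredients. The paper's proof runs in the \emph{necessity} direction: Step 1 argues that any equilibrium must have $\chi = h^0$, Step 2 derives the contribution cap $x_i^* \leq (\theta_i - s(X_{M_i}^*))/(1+\frac{B}{h^0}) \leq (\theta_i - \sigma)/(1+\frac{B}{h^0})$ from requiring the funded utility to weakly exceed the best unfunded utility, and Step 3 obtains the bound on $\sigma$ by summing these inequalities over all agents; it never explicitly verifies immunity to unilateral deviations, nor that the claimed set is nonempty. You instead carry out the \emph{sufficiency} verification the theorem statement actually calls for: (i) feasibility, showing $\sum_i \max\left(0, \frac{\theta_i-\sigma}{1+B/h^0}\right) > h^0$ from the hypothesis $\sigma < \frac{\vartheta_N - h^0 - B}{nd}$ together with $d \geq 1$ --- this is precisely the paper's Step 3 run in reverse; and (ii) explicit deviation checks, including the computation that a downward deviation's payoff $\frac{x_i'}{h^0 - x_i^* + x_i'}B + s(X_{M_i})$ is increasing in $x_i'$ with supremum $\frac{x_i^*}{h^0}B + s(X_{M_i}) < \frac{x_i^*}{h^0}B + \sigma \leq \theta_i - x_i^*$, which is the same core inequality as the paper's Step 2 but deployed as a best-response argument rather than an equilibrium characterization. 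What your route buys is an actual proof that the stated profiles are Nash equilibria and that the set is nonempty; what the paper's route buys is the derivation of the $\sigma$ bound as a necessary condition rather than an assumption to be checked. Two small remarks: you silently (and correctly) read the paper's $\min(0,\cdot)$ as $\max(0,\cdot)$ --- as printed, the statement is a typo, since $\min$ would force nonpositive contributions; and your referral-deviation step, like the paper's Step 0, treats the other agents' contributions as fixed when $i$ shrinks $M_i$, i.e., awareness is not endogenous within the deviation analysis. This matches the paper's model, but it is worth noting that this modeling choice, and not \textsc{RBF-Condition-2} alone, is what makes referring all neighbors weakly dominant: if dropping a referral could remove a contributor and push $\chi$ below $h^0$, the deviator's proportional refund share $\frac{x_i^*}{\chi}B$ would \emph{increase}, and the stated inequalities would no longer immediately dominate such a deviation.
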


\begin{proof}
First, we claim in Step-0 that it is a (weakly) dominant strategy for all agents to refer so that $M \cap N = N$. In Step 1, we show that, at equilibrium, $\chi = h^0$. In Step 2, we characterize the equilibria strategy of agent $i$ ($\psi_i^*$). Step 3 proves the upper bound on $\sigma$. 

\noindent{\underline{Step 0}}: \textsc{RBF-Conditions-1,2,3} ensure that every agent has an incentive to refer since an agent's unfunded utility increases monotonically with his referrals: $s(X_{M_i})$ increases monotonically with $M_i$ and is independent of his contribution. Thus, $M \cap N = N$.

\noindent{\underline{Step 1}}:
If $\chi > h^0$, any agent with a positive contribution can gain in utility by marginally decreasing his contribution.  $\chi < h^0$ cannot hold in equilibrium since, in REPP-R the unfunded utility always increases with contribution ($\frac{x_i}{\chi} B$).\footnote{In Step-3, we show that the upper bound on $\sigma$ ensures that the referral incentives do not override the incentives to contribute.} and $\vartheta_N > (nd\sigma + h^0 + B) > h^0$ means that there exists at least one agent $j \in N$ who can increase his (unfunded) utility by contributing more so that he get a higher refund bonus. 
Thus, in equilibrium $\chi = h^0$. 

\noindent{\underline{Step 2}}: Due to \textsc{Assumption-2}, agents do not have any bias in believing whether the project will be funded, other than the contributions. From Step 1, the contributions would be such that the project is funded in  equilibrium. Thus, at equilibrium, an agent will contribute such that his funded utility is no less than the highest possible unfunded utility, that is $\forall i$:
\begin{eqnarray}
\theta_i - x_i^* \geq  \frac{x_i^*}{h^0}B + s(X_{M_i}^*)  
\quad \text {or equivalently} \quad 
x_i^* \leq \left(\frac{\theta_i - s(X_{M_i}^*)}{1 + \frac{B}{h^0}} \right) \leq \left( \frac{\theta_i - \sigma}{1 + \frac{B}{h^0}} \right) \nonumber 
\end{eqnarray}
where the last inequality follows because even if agents are optimistic about referral bonus and go conservative for $x_i$,  $s(X_{M_i}^*) \leq \sigma$ (\textsc{RBF-Condition-3}). Since negative contributions (withdrawals) are not allowed, a negative equilibrium contribution means that an agent will refer but not contribute. 

\noindent{\underline{Step 3}}:
Summing up $\left(\theta_i - x_i^* \geq  \frac{x_i^*}{h^0}B + s(X_{M_i}^*)\right)$ for all agents and using $\sum_{i \in N} s(X_{M_i}^*) \leq nd\sigma$ (See Section \ref{REPPRImpact}),  we get the condition 
\begin{eqnarray*}
\vartheta_N - h^0 \geq  B + \sum_{i \in N} s(X_{M_i}^*) \quad \Rightarrow \quad 
\sigma < \frac{\vartheta_N - h^0 - B}{nd}
\end{eqnarray*}
\end{proof}
\noindent
\subsubsection{Explanation:} The upper bound\footnote{In theory, no lower bound on referral bonus is needed since any referral incentive, no matter how small, should incentivize referrals. In practice, a lower bound may depend on the effort required to contribute.} on $\sigma$ has a natural interpretation: if the \emph{referral} bonus is higher, it reduces the incentives for an agent to \emph{contribute} to an extent that the project does not get funded at equilibrium. In PPR, the condition for equilibrium is $B < (\vartheta_{M \cap N} - h^0)$, that is, the excess value ($\vartheta_{M \cap N} - h^0$) is used to incentivize contributions. In REPP-R, the excess value has to support the incentives for contribution \emph{and} the incentives for referrals ($B + nd\sigma < \vartheta_N - h^0$ ): with the important difference that the excess value is calculated in a larger pool ($N$ in REPP-R instead of $M \cap N$ in PPR). This means that in scenarios where $\vartheta_N > \vartheta_{M \cap N}$, REPP-R can achieve funding for projects which would not have been funded with PPR as long as the referral bonus is upper bounded appropriately.  

\section{Embedding Referrals in PPS : REPP-S} \label{REPPS}
\if 0
REPPS embeds referrals in PPS. PPS uses a prediction market to determine the refund bonus with the key idea being that contributors actually buy contingent securities \cite{arrow1964role} which each pay a unit amount if the project is not funded. 
In a prediction market setup, let $\Omega$ be the set of mutually exclusive and exhaustive outcomes of an event whose outcome would become known in the future. In crowdfunding setting, the outcome is binary: the project gets funded (positive outcome) or not (negative outcome), so $|\Omega| = 2$. A security associated with outcome $\omega_j$ pays a unit amount if $\omega_j$ is realized and zero otherwise. A cost function, $C : \mathbb{R}^{|\Omega|} \rightarrow \mathbb{R}$, is a potential function specifying the amount of money wagered in the market as a function of the number of securities that haven been issued by the market for each outcome. In a market with a binary outcome event, $C : \mathbb{R}^2 \rightarrow \mathbb{R}$ is function of the vector of outstanding securities, $\mathbf{q} = (q_{\omega_0},q_{\omega_1}) $ where the securities associated with the negative outcome ($q_{\omega_0}$) pay a unit amount if the project is not funded but securities associated with the positive outcome ($q_{\omega_1}$) \emph{never} payout. Furthermore, agents are not allowed to sell securities. These restrictions mean that the PPS uses a complex prediction market \cite{abernethy2013efficient} where the cost function must satisfy the following conditions (i) Path Independence (ii) Continuous and Differentiable (iii) Information Incorporation (iv) No arbitrage (v) Bounded Loss. To emphasize that securities related with the positive outcome are fixed  at initialization and not traded for the duration of the market ($[0,T]$), we will refer to the cost function used in this proposed prediction market as $C_0$. Such a cost function can be obtained by taking any cost function which satisfies these conditions and setting $q^t_{\omega_1} = 0 \quad \forall t \in [0,T]$. Thus, $C_0 : \mathbb{R} \rightarrow \mathbb{R}$ \cite{chandra2016}. Since we will be using prediction market involving \emph{only} negative outcome securities, we will use the following simplified notation $q \equiv q_{\omega_0}$. We use the notation $C^{-1}_0$ to refer to the inverse of the function $C_0$. The cost of purchasing $r_i^{t_i}$ securities at $t_i$ when $q^{t_i}$ securities are outstanding is $
\texttt{Cost}(r_i^{t_i}|q^{t_i}) = C_0(q^{t_i} + r_i^{t_i}) - C_0(q^{t_i})$. An agent who contributes $x_i$ at $t_i$ receives $r_i^{t_i}$ securities:
\begin{eqnarray}
r_i^{t_i} &=& C^{-1}_0(x_i + C_0(q^{t_i}))  - q^{t_i} \label{PPS_Securities}
\end{eqnarray}
In PPS, the total number of securities allocated to agent $i$ depends on (i) the quantum of his contribution ($x_i$) and (ii) the timing  of his contribution ($t_i$). 
\fi 

REPP-S embeds referrals in PPS \cite{chandra2016}. PPS uses a prediction market to determine the refund bonus with the key idea being that contributors are allotted contingent securities \cite{arrow1964role} which each pay a unit amount if the project is not funded. The authors set up a binary prediction market with two outcomes: (i) the project is funded (ii) the project is not funded. PPS allots securities for the project-not-funded outcome to agents who contribute. The number of securities associated with the project-funded outcome is $0 \quad \forall t \in [0,T]$. The number of securities allotted to an agent depends on the contribution and timing of his contribution. To determine the number of securities to allot, PPS leverages a complex prediction market \cite{abernethy2013efficient} created using a cost function $C:\mathbb{R}^2 \rightarrow \mathbb{R}$. To be used in PPS, a cost function must satisfy the following conditions: (i) Path Independence (ii) Continuous and Differentiable (iii) Information Incorporation (iv) No arbitrage (v) Bounded Loss \cite{abernethy2013efficient,chandra2016}. Let $q^t$ denote the total number of securities (associated with project-not-funded outcome) allotted till time $t$ in PPS. The number of securities allotted to agent $i$ if he contributes $x_i$ at time $t_i$ is: 
\begin{eqnarray}
r_i^{t_i} &=& C^{-1}_0(x_i + C_0(q^{t_i}))  - q^{t_i} \label{PPS_Securities}
\end{eqnarray}
where $C_0:\mathbb{R} \rightarrow \mathbb{R}$ is a function derived from $C$ by setting the number of the securities associated with the project-funded outcome  to $0 \quad \forall t \in [0,T]$ \cite{chandra2016}.

In REPP-S, the project is provisioned only if the collected funds reach the provision point. If the provision point is reached, the contributions are collected and neither the refund bonus nor the referral bonus is paid. If the provision point is not reached, the contributions are refunded and an additional bonus is paid to agents who volunteered to contribute. This bonus consists of two parts (i) a refund bonus and (ii) a referral bonus : \emph {both} of these are determined by an underlying prediction market. For agent-$i$, the refund bonus depends only on his contribution and is determined using Equation \eqref{PPS_Securities}. The referral bonus of agent-$i$ depends on the number of securities awarded to the agents referred by him; which, in turn, depends on their quantum and timing of contributions. The total number of securities allocate to agent $i$ in REPP-S is:
\begin{eqnarray}
\rho_i &\triangleq& r_i^{t_i} + s(R_{M_i}) \label{REPPSalloc}
\end{eqnarray}
where $R_{M_i} \triangleq \sum_{j \in M_i}r_j^{t_j}$ is the total number of securities allotted to agents referred by $i$. We make two key observations (i) $R_{M_i}$ depends both on the quanta of contributions generated due to agent-$i$'s referrals and the timing of those contributions: the earlier the referred contributions, the higher the referral bonus (ii) the securities associated with the refund bonus are allocated at $t_i$ as soon as agent-$i$ contributes but the securities associated with the referrals are allocated at $T$ to account for any contributions that may come in due to agent-$i$'s referrals. In REPP-S, agent-$i$'s strategy is $\psi_i = (x_i, t_i, M_i)$ and he has a utility:
\begin{eqnarray}
u_i(\psi;\theta_i) &=& \mathcal{I}_{\chi \geq h^0}(\theta_i - x_i) + \mathcal{I}_{\chi < h^0}(\rho_i - x_i )  \label{REPPSUtil2} 
\end{eqnarray}
\noindent 
Comparing Equation \eqref{REPPSUtil2} with Equation \eqref{REPPSUtil1} shows how REPP-S differs from PPS due to the securities allocated for referred contributions. 

\subsection{Impact of Introducing Referral Bonus in PPS}
To understand the impact of introducing referrals in REPP-S, we evaluate the maximum referral bonus that may need to be paid out: this depends on the total number of securities issued which in turn, depends on the cost function of the prediction market, the referral bonus function and the structure of the underlying referral tree. With an analysis similar to the REPP-R case, we can show (see Appendix) that the total number of securities issued is:
\begin{eqnarray}
q_{max} = \sum_{i=1}^{|M \cap N|} \rho_i = \sum_{i=1}^{|M \cap N|} (r_i^{t_i} + s(R_{M_i})) < C_0^{-1}(h^0 + C_0(0)) + nd\sigma \label{case1bound1}
\end{eqnarray}

\noindent
A higher $q_{max}$ means lower liquidity in the underlying prediction market and hence a lower incentive for contribution. In PPS, to ensure that agents contribute and the project gets funded at equilibrium, an agent's unfunded utility must be a monotonically increasing in his contribution. This, in turn, requires that the cost function must be sufficiently liquid \cite{chandra2016}. In REPP-S, ensuring that an agent's unfunded utility is monotonically increasing in his contribution requires:
\begin{eqnarray*}
\frac{\partial}{\partial x_i} \left (\rho_i - x_i \right) = \frac{\partial}{\partial x_i} \left (r_i^{t_i} + s(R_{M_i}) - x_i \right)  &>& 0 \quad \forall q^{t_i}, \forall x_i < h^0  \nonumber
\end{eqnarray*}
Since $s(X_{M_i})$ is independent of $x_i$ and $r_i^{t_i}$ monotonically decreases with $q^{t_i}$, this condition translates to:
\begin{eqnarray}
\quad \frac{\partial r_i^{t_i}}{\partial x_i}|_{q_{max}} = \frac{\partial}{\partial x_i} (C^{-1}_0(x_i + C_0(q_{max}))  - q_{max}) &>& 1 \label{ePPS}
\end{eqnarray}
Equation \eqref{case1bound1} and Equation \eqref{ePPS} determine the total bonus the sponsor must offer in REPP-S which is higher than in PPS. The advantage of a higher bonus is an increase in the participant pool and thus a higher likelihood of the project getting funded which in turn reduces the sponsor's risk.

\subsection{Equilibrium Analysis of REPP-S} 
\if 0
In a sequential setting, we look for a sub-game perfect equilibrium. 
Let $H^t$ be the history of the game till time $t$, that contains the agents' arrivals and their contributions.
\begin{definition}{(Sub-game Perfect Equilibrium)}
A strategy profile $\psi^* = (\psi_1^*,\ldots,\psi_n^*)$ is said to be a sub-game perfect  equilibrium if $ \forall i, \forall \theta_i$
\begin{eqnarray*}
u_i(\psi_i^*, \psi_{-i\mid H^{a_i}}^* ; \theta_i) &\geq& u_i( \tilde{\psi}_i, \psi_{-i\mid H^{a_i}}^*;\theta_i) \quad \forall \tilde{\psi}_i,  \forall H^t
\end{eqnarray*}
\end{definition}
where $\psi_{-i\mid H^{a_i}}^*$ indicates that the agents who arrive after $a_i$ follow the strategy specified in $\psi_{-i}^*$.
\fi 
\label{REPPSequilibrium}
\begin{theorem}
Let $C_0$ be an appropriate cost function and let $s()$ be a RBF that satisfies \textsc{RBF-Conditions 1-3}. If REPP-S is used for crowdfunding a project with provision point $h^0$ in a social network of $n$ agents with diameter $d$ and if $\sigma < \frac{\vartheta_N - C_0^{-1}(h^0 + C_0(0))}{nd}$, then the strategies in the set $\bigg\{(\psi_i^* = \{x_i^*, a_i, N_i\}) : x_i^* \leq  \min (0, (C_0(\theta_i - \sigma +q^{a_i}) - C_0(q^{a_i})))$ if $h^{a_i} > 0$, otherwise $x_i^* = 0$; $\chi = h^0\bigg\}$ are sub-game perfect  equilibria.
\label{thm:REPP_PPS_NE}
\end{theorem}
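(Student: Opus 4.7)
The plan is to mirror the four-step structure used for Theorem \ref{thm:REPP_PPR_NE}, adapting each step to the sequential prediction-market setting of REPP-S. The underlying strategy is to show that (i) referring is weakly dominant, (ii) the equilibrium collects exactly $h^0$, (iii) the candidate contribution cap follows from an indifference between funded and unfunded utility, and (iv) summing the indifference inequality across agents yields the stated upper bound on $\sigma$.

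\textbf{Step 0 (referrals are weakly dominant).} I would argue, exactly as in Step~0 of Theorem~\ref{thm:REPP_PPR_NE}, that because $s(R_{M_i})$ enters only the unfunded utility additively, is independent of $x_i$, and is monotonically increasing in $R_{M_i}$ by \textsc{RBF-Condition-2}, each agent (weakly) prefers to refer every neighbor. Hence $M\cap N = N$ in equilibrium and each referral tree has depth at most $d$.

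\textbf{Step 1 (the project is funded: $\chi = h^0$).} By Assumption~2 agents' beliefs are driven only by observable history. If $\chi > h^0$, any contributor can marginally reduce $x_i$ and strictly improve his funded utility without jeopardizing provisioning. If $\chi < h^0$, condition \eqref{ePPS} (which the assumption that $C_0$ is ``appropriate'' supplies) guarantees that each agent's unfunded utility $\rho_i - x_i$ is strictly increasing in $x_i$ at all $q^{t_i} \le q_{max}$; combined with $\vartheta_N > C_0^{-1}(h^0+C_0(0)) + nd\sigma$ this gives some agent $j\in N$ a strictly profitable deviation toward larger $x_j$. Hence $\chi = h^0$.

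\textbf{Step 2 (characterizing $x_i^*$).} Since the project is funded at equilibrium but an agent evaluates deviations without bias about the final outcome (Assumption~2), at equilibrium his funded utility must weakly dominate the highest plausible unfunded utility:
\begin{equation*}
\theta_i - x_i^* \;\ge\; r_i^{t_i^*} + s(R_{M_i}^*) - x_i^*.
\end{equation*}
Using $s(R_{M_i}^*) \le \sigma$ from \textsc{RBF-Condition-3} for a conservative bound on $r_i^{t_i^*}$ and then inverting \eqref{PPS_Securities} (via path-independence of $C_0$), I obtain
\begin{equation*}
x_i^* \;\le\; C_0\bigl(\theta_i - \sigma + q^{t_i^*}\bigr) - C_0\bigl(q^{t_i^*}\bigr).
\end{equation*}
A short sub-argument, analogous to PPS, shows it is optimal for an agent to commit at his arrival time: delaying cannot increase $r_i$ because $q^t$ is non-decreasing and $C_0^{-1}$ is concave, while the referral-bonus securities are already evaluated at $T$. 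This lets me replace $t_i^*$ with $a_i$. When $h^{a_i}=0$ or the bound is non-positive, the agent sets $x_i^*=0$.

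\textbf{Step 3 (the upper bound on $\sigma$).} Summing the indifference inequality from Step~2 across all $i\in N$, using $\chi = h^0$ together with path-independence to get $\sum_i r_i^{t_i^*} = C_0^{-1}(h^0 + C_0(0))$, and bounding the referral term by the worst-case tree argument of Section \ref{REPPRImpact} (which gives $\sum_i s(R_{M_i}^*) \le nd\sigma$), I obtain
\begin{equation*}
\vartheta_N \;\ge\; C_0^{-1}\bigl(h^0 + C_0(0)\bigr) + nd\sigma,
\end{equation*}
which rearranges to the hypothesis $\sigma < \tfrac{\vartheta_N - C_0^{-1}(h^0+C_0(0))}{nd}$. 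The main obstacle I anticipate is Step~1 combined with the timing argument in Step~2: one must verify that the liquidity condition \eqref{ePPS} derived at $q_{max}$ is genuinely binding throughout the game (not only at the terminal state), so that no subgame admits a profitable delay or under-contribution deviation; this is the place where the ``appropriate cost function'' hypothesis is doing real work, and where the proof differs substantively from the simultaneous-move REPP-R analysis.
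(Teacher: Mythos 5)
Your Steps 0--3 track the paper's own proof essentially line for line: the same weak-dominance argument for referrals, the same $\chi = h^0$ argument (the paper disposes of $\chi > h^0$ even more simply, noting the requester stops collecting at $h^0$), the same indifference condition $\rho_i^* \leq \theta_i$ combined with \textsc{RBF-Condition-3} to obtain Equation \eqref{equix}, the same monotonicity argument giving $t_i^* = a_i$, and the same summation yielding the bound \eqref{boundSigma}. So far, so good. However, there is a genuine gap: the theorem asserts \emph{sub-game perfect} equilibria, and your proposal never establishes sub-game perfection --- you only flag it at the end as an ``anticipated obstacle.'' Nash-style indifference and deviation arguments evaluated ex ante (your Steps 1--2) do not by themselves show that the strategies remain best responses after every history $H^t$, which is exactly what the SGPE definition demands. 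The paper supplies this as a separate Step 4, by backward induction: the last-arriving agent $j$ plays $x_j^* = 0$ if $h^{a_j} = 0$, and if $h^{a_j} > 0$ his funded and unfunded utilities coincide at $x_j^*$ \emph{irrespective of the particular history} (the equilibrium strategies depend on history only through the aggregates $h^{a_i}$, $q^{a_i}$), so conforming is a best response; inducting backwards, and noting no agent gains by delaying since the cost of securities never decreases, every agent's strategy is a best response in every subgame. Without some version of this argument your proof establishes only a Nash equilibrium of the induced game, not the stated result.

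On the specific technical worry you raise --- whether the liquidity condition \eqref{ePPS}, stated at $q_{max}$, is ``genuinely binding throughout the game'' --- this is not actually where the difficulty lies, and it resolves immediately from the paper's own setup: since $r_i^{t_i}$ in Equation \eqref{PPS_Securities} is determined by the concave function $C_0^{-1}$ (the cost function $C_0$ being convex), the marginal number of securities per unit contribution, $\partial r_i^{t_i}/\partial x_i$, is monotonically decreasing in $q^{t_i}$. Hence imposing the condition at the worst case $q_{max}$ (bounded via \eqref{case1bound1}) automatically guarantees it at every $q^{t_i} \leq q_{max}$, i.e., in every subgame. The substantive work you left undone is not this monotonicity check but the history-independence and backward-induction argument of Step 4.
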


\begin{proof} We claim in Step 0 that it is a (weakly) dominant strategy for all agents to refer so $M \cap N = N$. In Step 1, we show that, at equilibrium, $\chi = h^0$. In Step 2, we characterize the equilibria strategy of agent-$i$. Step 3 proves the upper bound on $\sigma$. We show that these equilibria are sub-game perfect in Step 4. 

\noindent{\underline{Step 0}}: This is similar to Step 0 of the proof of Theorem \ref{thm:REPP_PPR_NE}.

\noindent{\underline{Step 1}}: In equilibrium, $\chi > h^0$ cannot hold since the requester stops collecting the funds at $\chi = h^0$. Since, in REPP-S the unfunded utility always increases with contribution (See Equation \eqref{ePPS}) and $\vartheta_N > (nd \sigma + C_0^{-1}(h^0 +C_0(0))) > h^0$, $\chi < h^0$  mean that there is at least one agent who can increase his unfunded utility by contributing more and hence at equilibrium $\chi = h^0$.

\noindent{\underline{Step 2}}: Due to \textsc{Assumption-2}, agents do not have any bias in believing whether the project will be funded, other than the contributions. From Step 1, the contributions would be such that the project is funded in  equilibrium. Thus, at equilibrium, an agent will contribute such that his funded utility is no less than the highest possible unfunded utility. That is, $\rho_i^* - x_i^* \leq \theta_i - x_i^*$ or $\rho_i^*  \leq \theta_i$. Using Equation \eqref{PPS_Securities} and Equation  \eqref{REPPSalloc}, we get
\begin{eqnarray}
C_0^{-1}(x_i^* + C_0(q^{t_i}))  - q^{t_i}  \leq \theta_i - s(R_{M_i}^*) \quad \text{or equivalently} \nonumber \\
x_i^* \leq \bigg( C_0 \left(\theta_i - s(R_{M_i}^*) + q^{t_i}\right) -  C_0(q^{t_i}) \bigg) \leq \bigg(C_0 \left(\theta_i - \sigma + q^{t_i}\right) -  C_0(q^{t_i}) \bigg) \quad   \label{equix} 
\end{eqnarray}
where the last inequality follows from \textsc{RBF-Condition-3} since $s(R_{M_i}) \leq \sigma \quad \forall i$.
Since negative contributions (withdrawals) are not allowed, a negative equilibrium contribution means that an agent will refer but not contribute. 
\noindent Now, note that (i) the  RHS of Equation \eqref{equix} is a monotonically decreasing function of $q^{t_i^*}$ and (ii) $q^{t}$, the number of securities allotted by the market at time $t$, is a monotonically non-decreasing function of $t$. Thus, an agent with value $\theta_i$ minimizes the RHS at $t_i^* = a_i$, that is, he contributes as soon as he arrives. Thus, $t_i^* = a_i$. Intuitively, if an agent delays his contribution, to be indifferent between funded and unfunded utility, the agent needs to contribute more.

\noindent{\underline{Step 3}}: Summing up $\left(\rho_i^* - x_i^*  \leq \theta_i - x_i^*\right)$ for all agents leads to the condition $\sum_{i=1}^n \rho_i^* \leq \vartheta_N$. Using the bound on $\sum_{i=1}^n \rho_i$ from Equation \eqref{case1bound}, we get 
\begin{eqnarray}
\sigma < \frac{\vartheta_N - C_0^{-1}(h^0 + C_0(0))}{nd} \label{boundSigma}
\end{eqnarray}

\noindent{\underline{Step 4}}: These equilibria, specified as a function of the aggregate history ($h^{a_i}$), are also sub-game perfect (See Appendix for definition). Consider agent $j$ who arrives last at $a_j$. If $h^{a_j} = 0$, then his best strategy is $x_j^* =0$. If $h^{a_j} > 0$, irrespective of the history of the contributions and $h^{a_j}$, his funded and unfunded utility are the same at $x_j^*$, defined in the theorem and still it is best response for $j$ to follow the equilibrium strategy. With backward induction, by similar reasoning, it is best response for every agent to follow the equilibrium strategy, irrespective of the history, as long as others follow the equilibrium strategy. Further, no agent has an incentive to delay his contribution either (Assumption-2 and cost of securities never decreases in REPP-S). Thus, these equilibria are sub-game perfect.
\end{proof} 

\noindent {\bf Explanation:} In PPS, the condition for equilibrium is $\vartheta_{M \cap N} < C_0^{-1}(h^0 + C_0(0))$ and the excess value is used to sponsor a prediction market which issues securities for contributions. In REPP-S, the excess value has to support the incentive for contribution \emph{and} the incentives for referrals but the excess value is calculated in a larger pool too ($N$ in REPP-S instead of $M \cap N$ in PPS). The referral incentives can be either carved out of the same budget (lower liquidity in the prediction market) or can be paid out from additional budget (increase in the sponsor's budget). Finally, note that PPS is a \emph{class} of mechanisms. One instantation of PPS is (Logarithmic Market Scoring Rule)LMSR-PPS and our analysis applies to the whole class (See Appendix for LMSR -REPPS). 

\section{Conclusion } \label{Conclusion}
We considered civic crowdfunding, formalized the notion of social desirability of a public project, and proposed {\em Referral-Embedded Provision Point Mechanisms\/} (REPPM), a class of  mechanisms that achieve an equilibrium in which the project gets funded if it is socially desirable \emph{among the whole agent population}. By incentivizing agents to both contribute and refer other agents, REPPM solves two challenges: (i) agents do not free ride (every agent's equilibrium strategy is to contribute in proportion to his true value for the project) and (ii) information about the crowdfunding effort diffuses in the social network so that agents who have value for the project have an opportunity to contribute. This arises at the cost of a higher budget that a sponsor must furnish. However, since neither the referral bonus nor the refund bonus needs to be paid out at equilibrium, finding a sponsor who \emph{offers} these incentives is more likely.  With these advantages, our mechanisms can significantly improve the success rate of civic crowdfunding.
\begin{table}
\centering
{\small
\begin{tabular}{|c|c|c|c|c|} \hline
\textbf{Mechanism}&\textbf{Equilibrium Contribution}&\textbf{Social Desirability}&\textbf{Conditions}\\ \hline
PPR & $\frac{\theta_i}{1 + \frac{B}{h^0}}$ & $(M \cap N, h^0+B)$ & $B \in (0, \vartheta_{M \cap N} - h^0)$  \\ \hline
REPP-R & $\min \left(0,\frac{\theta_i - \sigma}{1 + \frac{B}{h^0}} \right)$ & $(N, h^0+B + nd\sigma))$ & $B \in (0, \vartheta_N - h^0 - nd\sigma)$ \\ \hline
PPS & $C_0(\theta_i+q^{a_i}) - C_0(q^{a_i})$ & $(M \cap N, C^{-1}_0(h^0 + C_0(0)))$ & Sufficient Liquidity \cite{chandra2016} \\ \hline
REPP-S & $\min (0, (C_0(\theta_i - \sigma +q^{a_i}) - C_0(q^{a_i})))$ & $ (N, (C_0^{-1}(h^0 + C_0(0)) + nd\sigma ))$ & Equation \eqref{ePPS} \\ \hline
LMSR-PPS & $b\ln \left(\frac{1 + \exp \left(\frac{\theta_i +  q^{a_i}}{b} \right)}{1 + \exp(\frac{q^{a_i}}{b})} \right)$ & $(M \cap N, h^0 + b \ln 2)$ & $b \in \left (0, \frac{\vartheta_{M \cap N} - h^0}{\ln 2} \right)$ \\ \hline
LMSR-REPP-S & $\min \left(0,b\ln \left(\frac{1 + \exp \left(\frac{\theta_i - \sigma +  q^{a_i}}{b} \right)}{1 + \exp(\frac{q^{a_i}}{b})} \right) \right)$ & $(N, h^0 + b \ln 2 + nd\sigma )$ & $b \in \left (0, \frac{\vartheta_{N} - h^0 - nd\sigma}{\ln 2} \right)$ \\ \hline
\end{tabular}}
\caption{Key Results}
\label{tab1}
\end{table}
\noindent
Comparing REPPM with the corresponding PPM (Table \ref{tab1}) shows that REPPM increases the pool of agents who can contribute, at the cost of increasing the threshold of social desirability. Thus, they are well suited in scenarios where the increase in the participant pool can significantly outweigh the increase in threshold. In web based civic crowdfunding platforms, where the successful funding of a public project requires a significant effort to attract \emph {contributors} and \emph{funds}, our mechanisms will have a significant impact.

\newpage
\bibliography{ecai}

\begin{thebibliography}{10}
\providecommand{\url}[1]{\texttt{#1}}
\providecommand{\urlprefix}{URL }

\bibitem{spacehive}
{Spacehive} (2015), available online at \url{http://www.spacehive.com}

\bibitem{abbassi11}
Abbassi, Z., Misra, V.: Multi-level revenue sharing for viral marketing. In:
  Proceedings of ACM NetEcon 2011 (June 2011)

\bibitem{abernethy2013efficient}
Abernethy, J., Chen, Y., Vaughan, J.W.: Efficient market making via convex
  optimization, and a connection to online learning. ACM Transactions on
  Economics and Computation  1(2), ~12 (2013)

\bibitem{arrow1964role}
Arrow, K.J.: The role of securities in the optimal allocation of risk-bearing.
  The Review of Economic Studies  31(2),  91--96 (1964)

\bibitem{arthur09}
Arthur, D., Motwani, R., Sharma, A., Xu, Y.: Pricing strategies for viral
  marketing on social networks. In: Internet and Network Economics, pp.
  101--112. Springer (2009)

\bibitem{bagnoli1989}
Bagnoli, M., Lipman, B.L.: Provision of public goods: Fully implementing the
  core through private contributions. The Review of Economic Studies  56(4),
  583--601 (1989)

\bibitem{brubaker1975free}
Brubaker, E.R.: Free ride, free revelation, or golden rule? The Journal of Law
  \& Economics  18(1),  147--161 (1975)

\bibitem{chandra2016}
Chandra, P., Gujar, S., Narahari, Y.: Crowdfunding public projects with
  provision point: A prediction market approach. In: In Proceedings of the 22nd
  European Conference on Artificial Intelligence (2016)

\bibitem{chandra2016b}
Chandra, P., Gujar, S., Narahari, Y.: Crowdsourced referral auctions. In: In
  Proceedings of the 22nd European Conference on Artificial Intelligence (2016)

\bibitem{chen1999}
Chen, Y.: Incentive-compatible mechanisms for pure public goods: a survey of
  experimental research. The Handbook of Experimental Economics Results,
  Amsterdam: Elsevier Press, forthcoming  (1999)

\bibitem{citizinvestor}
citizinvestor.com: {citizinvestor} (2015), available online at
  \url{http://www.citizinvestor.com}

\bibitem{dixit2009quality}
Dixit, D., Narahari, Y.: Quality concious and truthful query incentive
  networks. In: 5th Workshop on Internet and Network Economics, WINE. pp.
  386--397 (2009)

\bibitem{drucker12}
Drucker, F.A., Fleischer, L.K.: Simpler sybil-proof mechanisms for multi-level
  marketing. In: Proceedings of the 13th ACM conference on Electronic commerce.
  pp. 441--458. ACM (2012)

\bibitem{emek2011mechanisms}
Emek, Y., Karidi, R., Tennenholtz, M., Zohar, A.: Mechanisms for multi-level
  marketing. In: Proceedings of the 12th ACM conference on Electronic commerce.
  pp. 209--218. ACM (2011)

\bibitem{groves1977}
Groves, T., Ledyard, J.: Optimal allocation of public goods: A solution to the"
  free rider" problem. Econometrica: Journal of the Econometric Society pp.
  783--809 (1977)

\bibitem{hanson2012logarithmic}
Hanson, R.: Logarithmic markets coring rules for modular combinatorial
  information aggregation. The Journal of Prediction Markets  1(1),  3--15
  (2012)

\bibitem{healy2006}
Healy, P.J.: Learning dynamics for mechanism design: An experimental comparison
  of public goods mechanisms. Journal of Economic Theory  129(1),  114--149
  (2006)

\bibitem{kleinberg05}
Kleinberg, J., Raghavan, P.: Query incentive networks. In: Foundations of
  Computer Science, 2005. FOCS 2005. 46th Annual IEEE Symposium on. pp.
  132--141. IEEE (2005)

\bibitem{lobel2015customer}
Lobel, I., Sadler, E., Varshney, L.R.: Customer referral incentives and social
  media. In: Proceedings of the Sixteenth ACM Conference on Economics and
  Computation. pp. 451--451. ACM (2015)

\bibitem{marx2000}
Marx, L.M., Matthews, S.A.: Dynamic voluntary contribution to a public project.
  The Review of Economic Studies  67(2),  327--358 (2000)

\bibitem{morgan2000}
Morgan, J.: Financing public goods by means of lotteries. The Review of
  Economic Studies  67(4),  761--784 (2000)

\bibitem{nath2012mechanism}
Nath, S., Dayama, P., Garg, D., Narahari, Y., Zou, J.: Mechanism design for
  time critical and cost critical task execution via crowdsourcing. In:
  Internet and network economics, pp. 212--226. Springer (2012)

\bibitem{neighbourly}
neighbourly.com: {neighbourly} (2015), available online at
  \url{http://www.neighbourly.com}

\bibitem{pickard11}
Pickard, G., Pan, W., Rahwan, I., Cebrian, M., Crane, R., Madan, A., Pentland,
  A.: Time-critical social mobilization. Science  334(6055),  509--512 (2011)

\bibitem{schmidtz1991limits}
Schmidtz, D.: The Limits of Government (Boulder. Westview Press (1991)

\bibitem{tabarrok1998private}
Tabarrok, A.: The private provision of public goods via dominant assurance
  contracts. Public Choice  96(3-4),  345--362 (1998)

\bibitem{zubrickas2014}
Zubrickas, R.: The provision point mechanism with refund bonuses. Journal of
  Public Economics  120,  231--234 (2014)

\end{thebibliography}

\newpage
\section{Appendix}
\subsection {Notation Table}
\label{ssec:notation}
The following table summarizes the notation used in this paper.
\begin{table}
\centering
\begin{tabular}{|c|p{9.5 cm}|} \hline
\textbf{Symbol}&\textbf{Definition}\\ \hline
$T$ & Time at which fund collection concludes \\ \hline
$t$ & Epoch of time in the interval $[0, T]$ \\ \hline 
$h^t$ & Amount that remains to be funded at $t$; \\ \hline
$h^0$ & Target amount (provision point) \\ \hline
$i \in \{0,1, \ldots \}$ & Agent id; $i=0$ refers to the requester  \\ \hline
$N_i$ & Neighbors of agent $i$ in the social network \\ \hline
$M_i$ & Set of contributors referred to by agent$i$ \\ \hline
$\theta_i \in \mathbb{R}_+$ & Agent $i$'s value for the project \\ \hline
$x_i \in \mathbb{R}_+$ & Agent $i$'s contribution to the project \\ \hline
$a_i \in [0,T]$ & Time at which agent $i$ arrives at the platform \\ \hline
$t_i \in [a_i, T]$ & Time at which agent $i$ contributes to the project\\ \hline
$\psi_i$ & Strategy of agent $i$ \\\hline
$\vartheta_N \in \mathbb{R}_+$ & Net value for the project among agent set $N$\\ \hline
$\chi \in \mathbb{R}_+$ & Net contribution for the project \\ \hline
\end{tabular}
\caption{Key notation}
\label{tab2}
\end{table}

\subsection{Equilibrium Definitions} \label{Problem}
We seek to design mechanisms in a sequential setting such that a public project gets funded at equilibrium. Such mechanisms induce a game among the agents $\{1,2,\ldots,n\}$. With $\psi_i$s being agents' strategies and $u_i$s as their utilities,  we define Pure Strategy Nash Equilibrium (PSNE) and Sub-Game Perfect Equilibrium (SGPE).
\begin{definition}{(Pure Strategy Nash Equilibrium)}
A strategy profile $\psi^* = (\psi_1^*,\ldots,\psi_n^*)$ is said to be a Pure Strategy Nash Equilibrium (PSNE) if $ \forall i, \forall \theta_i $
\begin{eqnarray*}
u_i(\psi_i^*, \psi_{-i}^* ; \theta_i) &\geq& u_i( \tilde{\psi}_i, \psi_{-i}^*;\theta_i) \quad \forall \tilde{\psi}_i.
\end{eqnarray*}
\end{definition}

\noindent
Let $H^t$ be the history of the game till time $t$, that contains the agents' arrivals and their contributions, then we define:
\begin{definition}{(Sub-game Perfect Equilibrium)}
A strategy profile $\psi^* = (\psi_1^*,\ldots,\psi_n^*)$ is said to be a sub-game perfect  equilibrium if $ \forall i, \forall \theta_i$
\begin{eqnarray*}
u_i(\psi_i^*, \psi_{-i\mid H^{a_i}}^* ; \theta_i) &\geq& u_i( \tilde{\psi}_i, \psi_{-i\mid H^{a_i}}^*;\theta_i) \quad \forall \tilde{\psi}_i,  \forall H^t
\end{eqnarray*}
\end{definition}

\noindent
Here $\psi_{-i\mid H^{a_i}}^*$ indicates that the agents who arrive after $a_i$ follow the strategy specified in $\psi_{-i}^*$.

\subsection{REPP-R Worst Case Analysis}
As the number of referrals needed per unit of contribution increases, more referral bonus needs to be paid out. Since the exact amount of referral bonus depends on the referral tree structure, we analyze two possible worst case scenarios. 

\subsubsection {Case-1:}: The provision point ($h^0$) is achieved by $n$ contributions of the smallest possible contribution $\delta = \frac{h^0}{n}$ each  - all of them referred by a different agent. Figure \ref{ReferralBonus}(b) shows such an example where the provision point is met by the contribution of agents in the set $\{7,8,9,10,11,12\}$ each one referred by a different agent. In this example, the contribution is inter-mediated by exactly one referring agent: in general, the path length between a contributor and the sponsor may consist of $d$ unique agents who do not contribute - $d$ being the diameter of the underlying social network. The total referral bonus paid out is $nd \times s(\delta) < nd\sigma$.

\subsubsection {Case-2:}: The provision point ($h^0$) is achieved by $n$ contributions of the smallest possible contribution $\delta = \frac{h^0}{n}$ each - all of them referred by the same agent. Figure \ref{ReferralBonus}(c) shows such an example where the provision point is met by the contribution of agents in the set $\{7,8,9,10,11,12\}$ all of them referred by $6$. In this example, the contribution is inter-mediated by exactly one referring agent: in general, the path length between a contributor and the sponsor may consist of $d$ nodes who monopolize the contributions - $d$ being the diameter of the underlying social network. The total referral bonus paid out in this case is $d \times s(n \delta) < d\sigma$.

\textsc{RBF-Condition-3} ensures that the RBF is a concave function so that the worst case is Case-1.
\subsection{REPP-S Worst Case Analysis}
Since the amount of referral bonus depends on the referral tree structure, we analyze two possible (worst case) scenarios. 

\subsubsection {Case-1:}: The provision point is met by the smallest allowed contributions ($\delta$) and each of these contributions is due to a referral. Thus, the provision point ($h^0$) is achieved by $n$ contributions of $\delta = \frac{h^0}{n}$ each. Figure \ref{ReferralBonus}(b) shows such an example. The total number of securities issued is $\sum_{i=1}^n \rho_i = \sum_{i=1}^n (r_i^{t_i} + s(R_{M_i}))$ which can be expressed in terms of the cost function used in the prediction market and the RBF as:
\begin{eqnarray}
\sum\limits_{i=0}^{n-1} \bigg(C_0^{-1}(\delta + C_0(i \delta)) + d \times s(C_0^{-1}(\delta + C_0(i \delta))) \bigg) &=& 
C_0^{-1}(h^0 + C_0(0)) + d \times \sum\limits_{j=0}^{n-1}  s(C_0^{-1}(\delta + C_0(j\delta)))) \nonumber 
\end{eqnarray}
where the first term follows since the cost function used in the prediction market is path independent \cite{chandra2016}.  Since the RBF and cost function are monotonically increasing, we also have:
\begin{eqnarray*}
d \times \sum\limits_{j=0}^{n-1}  s(C_0^{-1}(\delta + C_0(j\delta)))) \leq nd \times s(C_0^{-1}(\delta + C_0(0)))) 
\end{eqnarray*}
Finally, \textsc{RBF-Condition-3} ensures that $nd \times s(C_0^{-1}(\delta + C_0(0)))) < nd\sigma$, so:
\begin{eqnarray*}
\sum\limits_{i=0}^{n-1} \bigg(C_0^{-1}(\delta + C_0(i \delta)) + d \times s(C_0^{-1}(\delta + C_0(i \delta))) \bigg) < C_0^{-1}(h^0 + C_0(0)) + nd\sigma \label{case1bound}
\end{eqnarray*}

\subsubsection {Case-2:}: The provision point is met by the smallest allowed contributions ($\delta$) and all the contributions are referred by a single agent. Figure \ref{ReferralBonus}(c) shows such an example. The total number of securities issued is:
 \begin{eqnarray}
\sum\limits_{i=0}^{n-1} \bigg (C_0^{-1}(\delta + C_0(i \delta)) \bigg) + d \times s\left(\sum\limits_{i=0}^{n-1} C_0^{-1}(\delta + C_0(i \delta)) \right) 
&\leq& C_0^{-1}(h^0 + C_0(0)) + d\sigma \nonumber \label{case2bound}
 \end{eqnarray}
Which of the two cases is applicable in a given scenario depends on the RBF. Specifically, Case-1 applies when 
\textsc{RBF-Condition-3} ensures that the RBF is a concave function so that $\sum_{i=0}^{n-1} \left(s(C_0^{-1}(\delta + C_0(i \delta))) \right) >  
s\left(\sum_{i=0}^{n-1} C_0^{-1}(\delta + C_0(i \delta)) \right)$ and thus the worst case is Case-1.

\subsection{LMSR - REPP-S} \label{LMSRREPPS}
In a market with a binary outcome event, let the vector of outstanding securities be $(q_{\omega_0},q_{\omega_1})$. The number of securities associated with the project-funded outcome ($q_{\omega_1}$) is $0 \forall t \in [0,T]$. Let $q^t$ denote the total number of securities (associated with project-not-funded outcome) allotted till time $t$ in PPS. A popular cost functions that satisfies the conditions to be used in PPS is the LMSR (Logarithmic Market Scoring Rule) based cost function \cite{hanson2012logarithmic}: 
\begin{eqnarray*}
C(q_{\omega_0}, q_{\omega_1}) = b\ln(\exp(q_{\omega_0}/b) + \exp(q_{\omega_1}/b)) \Rightarrow C_0(q_0^t) = b\ln(1 + \exp(q_0^t/b))
\end{eqnarray*}
where $b$ is a parameter that controls the market liquidity. Applying Equation \eqref{equix} for an LMSR based REPP-S, the equilibrium contribution of agent $i$ with value $\theta_i$ who arrives at $a_i$ and refers $M_i$ agents is :
\begin{eqnarray*}
x_i^* &\leq& b\ln \left(1 + \exp\left(\frac{\theta_i  - \sigma +  q_0^{a_i}}{b}\right)\right) - b\ln \left(1 + \exp(\frac{q_0^{a_i}}{b}) \right) \\
\end{eqnarray*}
Using Equation \eqref{case1bound1}, the maximum number of securities that can be allocated is:
\begin{eqnarray*}
b\ln \left(\exp \left(\frac{h^0}{b} +  \ln (2)\right) - 1\right) + nd \times s \left(b\ln \left(\exp \left(\frac{\delta}{b} +  \ln (2)\right) - 1\right) \right) &<& h^0 + b \ln 2 + nd\sigma 
\end{eqnarray*}
and applying Equation \eqref{boundSigma} the condition for a project to be funded at equilibrium is:  
\begin{eqnarray*}
h^0 + b \ln 2 + nd\sigma < \vartheta_N
\end{eqnarray*}

\end{document}